\newtheorem{theorem}{Theorem}[section]
\newtheorem{proposition}[theorem]{Proposition}
\newtheorem{corollary}[theorem]{Corollary}
\newtheorem{definition}[theorem]{Definition}
\numberwithin{equation}{section}
\begin{document}

\title{ Inequalities for the Schmidt Number of Bipartite States}

\author[Cariello ]{D. Cariello}

\address{Faculdade de Matem\'atica, \newline\indent Universidade Federal de Uberl\^{a}ndia, \newline\indent 38.400-902 Ð Uberl\^{a}ndia, Brazil.}\email{dcariello@ufu.br}

\keywords{}

\subjclass[2010]{}

\begin{abstract}In this short note we show two completely opposite methods of constructing entangled states. Given a bipartite state $\gamma\in M_k\otimes M_k$, define $\gamma_S=(Id+F)\gamma (Id+F)$, $\gamma_A=(Id-F)\gamma(Id-F)$, where  $F\in M_k\otimes M_k$ is the flip operator.  In the first  method, entanglement is a consequence of the  inequality $\text{rank}(\gamma_S)<\sqrt{\text{rank}(\gamma_A)}$. 
In the second method, there is no correlation between $\gamma_S$ and $\gamma_A$. These two methods show how diverse is quantum entanglement.
 
We prove that any bipartite state $\gamma\in M_k\otimes M_k$ satisfies 
 $$\displaystyle SN(\gamma)\geq\max \left\{ \frac{\text{rank}(\gamma_L)}{\text{rank}(\gamma)}, \frac{\text{rank}(\gamma_R)}{\text{rank}(\gamma)}, \frac{SN(\gamma_S)}{2}, \frac{SN(\gamma_A)}{2} \right\},$$ 
where $SN(\gamma)$ stands for the Schmidt number of $\gamma$ and $\gamma_L,\gamma_R$ are the marginal states of $\gamma$.

We also present a family of PPT states in $M_k\otimes M_k$, whose members have Schmidt number equal to $n$, for any given  $1\leq n\leq \left\lceil\frac{k-1}{2}\right\rceil$. This is a new contribution to the open problem of finding the best possible Schmidt number for PPT states.  
\end{abstract}

\maketitle

\section{Introduction}

The separability problem in Quantum Information Theory asks for a deterministic criterion to distinguish the entangled states from the separable states \cite{Guhne}.  This problem is known to be a  hard problem even for bipartite mixed states \cites{gurvits2003, gurvits2004}. 

The Schmidt number of a state ($SN(\gamma)$ - Definition \ref{def1})  is a measure of how entangled a state is \cites{Terhal, Sperling2011}. If its Schmidt number is 1 then the state is separable. If its Schmidt number is greater than 1 then the state is entangled. A method to compute the Schmidt Number  is unknown.

Denote by  $M_k$ the set of complex matrices of order $k$.  The separability problem has been completely solved in $M_2\otimes M_2$. A state in $M_2\otimes M_2$  is separable if and only if it is positive under partial transposition or simply PPT (Definition \ref{def1}) \cites{peres,horodeckifamily}. Therefore, the Schmidt number of a PPT state in $M_2\otimes M_2$ is  equal to 1. Recently, the Schmidt number of every PPT state of $M_3\otimes M_3$ has been proved to be less or equal to 2 \cites{Yang, Chen}.

The authors of \cite{Marcus} left an open problem to determine the best possible Schmidt number for PPT states.   
They also presented  a construction of PPT states in $M_k\otimes M_k$ whose Schmidt numbers  are greater or equal to $\left\lceil\frac{k-1}{4}\right\rceil$.
This was the first explicit
example of a family of PPT states achieving a Schmidt number that scales linearly in the local dimension.  

We investigate this matter. We present an explicit construction of PPT states  in $M_k\otimes M_k$, whose Schmidt numbers are \textbf{equal} to $n$,  for any given  $1\leq n\leq \left\lceil\frac{k-1}{2}\right\rceil$. This is a new contribution to their open problem.

We manage to compute the Schmidt number of these PPT states using the following inequality  \begin{equation}\label{equation1}
SN(\gamma)\geq \max\left\{\frac{SN(\gamma_S)}{2}, \frac{SN(\gamma_A)}{2}\right\},
\end{equation} 
 where  $\gamma_S=(Id+F)\gamma (Id+F)$, $\gamma_A=(Id-F)\gamma(Id-F)$ and  $F\in M_k\otimes M_k$ is the flip operator $($i.e., $F(a\otimes b)=b\otimes a,$ for every $a,b\in\mathbb{C}^k)$. 
 
 We believe this is one of the simplest constructions of an entangled PPT state made so far.
 
 Another inequality that we present  here extends a result that  was previously known for separable states $($\cite[Theorem 1]{smolin}$)$ to every state of $ M_k\otimes M_m$.   Denote by $\gamma_L$ and $\gamma_R$ the marginal states of a state $\gamma\in M_k\otimes M_m$ (Definition \ref{def1}).
 
 We show that every state $\gamma$ of $M_k\otimes M_m$ satisfies\\
\begin{equation}\label{equation2}
\text{rank}(\gamma)SN(\gamma)\geq \max\{\text{rank}(\gamma_L), \text{rank}(\gamma_R)\}.
\end{equation} 

\vspace{0.3 cm}
We can use this inequality to obtain a lower bound for the Schmidt number of  low rank states.

 Next, through a series of very technical results, the author of \cite{Cariello_LAA} obtained the following lower bounds for the $\text{rank}(\gamma_S)$ of  any separable state $\gamma\in M_k\otimes M_k$   $$\text{rank}(\gamma_S)\geq \max\left\{\frac{r}{2}, \frac{2}{r} \text{rank}(\gamma_A)\right\},$$
where  $r$ is the marginal rank of  $\gamma+F\gamma F$.

These  inequalities can be combined into one inequality: \\
 \begin{center}
  $\displaystyle\text{rank}(\gamma_S)\geq  \frac{2}{r} \text{rank}(\gamma_A)\geq \frac{\text{rank}(\gamma_A)}{\text{rank}(\gamma_S)}.$
  \end{center} 
  \vspace{0.3 cm}
  
Hence,  $\text{rank}(\gamma_S)\geq \sqrt{\text{rank}(\gamma_A)}$ for every separable state $\gamma\in M_k\otimes M_k$.
Therefore,  if $\text{rank}(\gamma_S)<\sqrt{\text{rank}(\gamma_A)}$ then $\gamma$ is entangled. \\

Next, we can combine equations \ref{equation1} and \ref{equation2} in order to obtain
$$\displaystyle SN(\gamma)\geq \frac{\text{rank}((\gamma_S)_L)}{2\ \text{rank}(\gamma_S)}\text{ and  }\displaystyle SN(\gamma)\geq \frac{\text{rank}((\gamma_A)_L)}{2\ \text{rank}(\gamma_A)}.$$


We can easily create entangled states  by satisfying $\displaystyle\frac{\text{rank}((\gamma_S)_L)}{\text{rank}(\gamma_S)}>2$ or $ \displaystyle \frac{\text{rank}((\gamma_A)_L)}{\text{rank}(\gamma_A)}>2$ and no correlation between $\gamma_S$ and $\gamma_A$ is required. \\

These two methods of creating entangled states are completely opposite. One depends on a correlation between $\gamma_S, \gamma_A$ and the other does not.  They show how diverse is  quantum entanglement.\\

This paper is organized as follows. \begin{itemize}
\item In Section II, we prove that $SN(\gamma)\geq \max\left\{\frac{SN(\gamma_S)}{2}, \frac{SN(\gamma_A)}{2}\right\}$ (Proposition \ref{proposition1}) and we construct  a PPT  state whose Schmidt number is equal to $n$, for any given  $n\in\{1,\ldots,\left\lceil\frac{k-1}{2}\right\rceil\}$ (Proposition \ref{proposition2}). 
\item In Section III, we prove our main inequality $\text{rank}(\gamma)SN(\gamma)\geq \max\{\text{rank}(\gamma_L), \text{rank}(\gamma_R)\}$  (Theorem \ref{theoremnew0}) and two corollaries $\displaystyle SN(\gamma)\geq \frac{\text{rank}((\gamma_S)_L)}{2\ \text{rank}(\gamma_S)}$ and $ \displaystyle SN(\gamma)\geq \frac{\text{rank}((\gamma_A)_L)}{2\ \text{rank}(\gamma_A)}$ (Corollaries \ref{corollary1} and \ref{corollary2}). \\
\end{itemize}

Notation: Given $x\in\mathbb{R}$, define $\lceil x\rceil=\min\{n\in\mathbb{Z}, n \geq x \}$.   Identify  $M_k\otimes M_m\simeq M_{km}$ and  $\mathbb{C}^k\otimes \mathbb{C}^m\simeq \mathbb{C}^{km}$ via Kronecker product. Let us call a positive semidefinite Hermitian matrix of $M_{km}$ a (non-normalized bipartite) state of  $M_k\otimes M_m$. Let $\Im(\delta)$ denote the image of $\delta\in M_k\otimes M_m$ within  $\mathbb{C}^k\otimes \mathbb{C}^m$.  Given $w\in \mathbb{C}^k\otimes \mathbb{C}^m$ denote by $SR(w)$ its Schmidt rank $($or tensor rank$)$. Let  the trace of a matrix $A\in M_k$ be denoted by $tr(A)$. \\

\section{Preliminary Inequalities}

\vspace{0.3cm}

In this section we present two preliminary inequalities (Proposition \ref{proposition1}). They have independent interest as we can see in Proposition \ref{proposition2}. There we construct a family of PPT states in $M_k\otimes M_k$ whose members have Schmidt number equal to $n$, for any given  $1\leq n\leq \left\lceil\frac{k-1}{2}\right\rceil$. \\

 \begin{definition}\label{def1}
Given a state $\delta=\sum_{i=1}^nA_i\otimes B_i \in M_k\otimes M_m$, define  
 \begin{itemize}

 \item the Schmidt number of $\delta$  as \\$\displaystyle SN(\delta)=\min\left\{\max_{ j}\left\{SR(w_j) \right\},  \ \delta=\sum_{j=1}^mw_j\overline{w_j}^t \right\}$
 $($This minimum is taken over all decompositions of $\delta$ as $\sum_{j=1}^mw_j\overline{w_j}^t)$.
 \item   the partial transposition  of $\delta$ as $\delta^{\Gamma}=\sum_{i=1}^nA_i\otimes B_i^t$ .
Moreover, let us say that $\delta$ is positive under partial transposition or simply a PPT state  if and only if $\delta$ and $\delta^{\Gamma}$ are states.
\item the marginal states of $\delta$ as $\delta_L=\sum_{i=1}^nA_i tr(B_i)$ and $\delta_R=\sum_{i=1}^n B_i tr(A_i)$.
\\

\end{itemize}    
  \end{definition}

\begin{proposition}\label{proposition1} Every state  $\gamma\in M_k\otimes M_k$ satisfies $\displaystyle SN(\gamma)\geq \max\left\{\frac{SN(\gamma_S)}{2}, \frac{SN(\gamma_A)}{2}\right\}$.
\end{proposition}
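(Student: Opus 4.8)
The plan is to start from an optimal decomposition of $\gamma$ realizing its Schmidt number and push it through the congruences by $Id+F$ and $Id-F$, tracking how the Schmidt ranks of the individual vectors change. First I would fix, by the definition of the Schmidt number, a decomposition $\gamma=\sum_{j=1}^m w_j\overline{w_j}^t$ with $\max_j SR(w_j)=SN(\gamma)$.

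Then I would record that $F$ is a real symmetric matrix, hence so are $Id+F$ and $Id-F$. Setting $u_j=(Id+F)w_j$ and $v_j=(Id-F)w_j$, a short computation using $(Id\pm F)^t=Id\pm F=\overline{Id\pm F}$ gives $(Id+F)w_j\overline{w_j}^t(Id+F)=u_j\overline{u_j}^t$ and $(Id-F)w_j\overline{w_j}^t(Id-F)=v_j\overline{v_j}^t$. Summing over $j$ yields the decompositions $\gamma_S=\sum_j u_j\overline{u_j}^t$ and $\gamma_A=\sum_j v_j\overline{v_j}^t$, which have exactly the form required by the definition of the Schmidt number.

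The key step is to bound $SR(u_j)$ and $SR(v_j)$. Writing a Schmidt decomposition $w_j=\sum_{i=1}^{r}a_i\otimes b_i$ with $r=SR(w_j)$, the identity $F(a\otimes b)=b\otimes a$ gives $Fw_j=\sum_{i=1}^{r}b_i\otimes a_i$, so $SR(Fw_j)\leq r$. Since the Schmidt rank is subadditive under sums, $SR(u_j)=SR(w_j+Fw_j)\leq 2r$ and likewise $SR(v_j)\leq 2r$; that is, $SR(u_j),SR(v_j)\leq 2\,SR(w_j)\leq 2\,SN(\gamma)$.

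Combining the two observations, both decompositions above satisfy $\max_j SR(u_j)\leq 2\,SN(\gamma)$ and $\max_j SR(v_j)\leq 2\,SN(\gamma)$, so by definition $SN(\gamma_S)\leq 2\,SN(\gamma)$ and $SN(\gamma_A)\leq 2\,SN(\gamma)$; rearranging gives the claim. I expect the only delicate point to be the verification that congruence by the real symmetric matrix $Id\pm F$ carries $w\overline{w}^t$ to $\bigl((Id\pm F)w\bigr)\overline{\bigl((Id\pm F)w\bigr)}^t$; this is pure bookkeeping, and everything substantive reduces to the flip-invariance of the Schmidt rank together with its subadditivity under sums.
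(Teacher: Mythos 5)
Your proof is correct and follows essentially the same route as the paper's: take a decomposition $\gamma=\sum_j w_j\overline{w_j}^t$ realizing $SN(\gamma)$, conjugate it by $Id\pm F$ to get decompositions of $\gamma_S$ and $\gamma_A$, and bound $SR\bigl((Id\pm F)w_j\bigr)\leq 2\,SR(w_j)$ via flip-invariance and subadditivity of the Schmidt rank. The only difference is that you spell out the bookkeeping (realness and symmetry of $Id\pm F$, the rank-one congruence identity) that the paper leaves implicit.
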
 
\begin{proof}
 By definition \ref{def1},  there is a subset $\{w_1,\ldots,w_n\}\subset\mathbb{C}^k\otimes \mathbb{C}^k$ such that
$\gamma=\sum_{i=1}^nw_i\overline{w_i}^t$  and  $SR(w_i)\leq SN(\gamma)$, for every $i$.\\

Therefore, $(Id\pm F)\gamma(I\pm F)=\sum_{i=1}^n v_i\overline{v_i}^t$, where $v_i=(Id\pm F)w_i$. Notice that, for every $i$, $$SR(v_i)=SR(w_i\pm Fw_i)\leq 2SR(w_i)\leq 2SN(\gamma).$$

Hence, $SN((Id\pm F)\gamma(I\pm F))\leq 2SN(\gamma)$.
\end{proof}

\vspace{0.3cm}

\begin{proposition}\label{proposition2}  Let    $v=\sum_{i=1}^na_i\otimes b_i$, where $\{a_1,\ldots,a_n,b_1,\ldots b_n\}$ is a linearly independent subset of $\mathbb{C}^k$.  Define $$\gamma=Id+F+\epsilon(v\overline{v}^t)\in M_k\otimes M_k.$$
\begin{enumerate}
\item For every $\epsilon>0$, $SN(\gamma)=n$. Notice that $1\leq n\leq \left\lceil\frac{k-1}{2}\right\rceil$. 
\item There is $\epsilon>0$ such that $\gamma$ is positive under partial transposition.
\end{enumerate}
\end{proposition}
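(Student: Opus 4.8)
The plan is to pin down $SN(\gamma)=n$ by proving matching upper and lower bounds, and to obtain the PPT property by a small-$\epsilon$ perturbation argument. First, the range constraint is immediate: linear independence of $\{a_1,\dots,a_n,b_1,\dots,b_n\}\subset\mathbb{C}^k$ forces $2n\le k$, which for integer $n$ is exactly $n\le\lceil\frac{k-1}{2}\rceil$. For the upper bound $SN(\gamma)\le n$, I would use that $Id+F=2P$, twice the orthogonal projection onto the symmetric subspace, is a separable state (the symmetric Werner state, a mixture of product states $w\overline{w}^t$ with $SR(w)=1$). Writing $Id+F=\sum_j w_j\overline{w_j}^t$ with each $SR(w_j)=1$, and noting that $v$ has Schmidt rank exactly $n$ since $\{a_i\}$ and $\{b_i\}$ are each linearly independent, the decomposition $\gamma=\sum_j w_j\overline{w_j}^t+\epsilon\, v\overline{v}^t$ uses only vectors of Schmidt rank at most $n$; hence $SN(\gamma)\le n$.

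For the lower bound I would invoke Proposition \ref{proposition1}, reducing the task to estimating $SN(\gamma_A)$. The key simplification is that $(Id-F)(Id+F)=Id-F^2=0$, so every term arising from $Id+F$ is annihilated and
\[
\gamma_A=(Id-F)\gamma(Id-F)=\epsilon\,(Id-F)v\overline{v}^t(Id-F)=\epsilon\, v_A\overline{v_A}^t,
\]
where $v_A=(Id-F)v=\sum_{i=1}^n(a_i\otimes b_i-b_i\otimes a_i)$. Since $\gamma_A$ is a positive multiple of a pure state, $SN(\gamma_A)=SR(v_A)$. I would compute this Schmidt rank by reshaping $v_A$ to the matrix $M=\sum_{i=1}^n(a_ib_i^t-b_ia_i^t)$ under $a\otimes b\mapsto ab^t$: setting $C=[a_1,\dots,a_n,b_1,\dots,b_n]$ and $J=\left(\begin{smallmatrix}0&Id_n\\-Id_n&0\end{smallmatrix}\right)$ gives $M=CJC^t$, and since $C$ has full column rank $2n$ and $J$ is invertible, $\text{rank}(M)=2n$. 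Thus $SR(v_A)=2n$, so $SN(\gamma_A)=2n$, and Proposition \ref{proposition1} yields $SN(\gamma)\ge SN(\gamma_A)/2=n$. Together with the upper bound this gives claim (1).

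For claim (2), positivity of $\gamma$ is clear since $Id+F$ and $\epsilon v\overline{v}^t$ are both states. For the partial transpose I would use $F^\Gamma=\Omega\Omega^t$ with $\Omega=\sum_i e_i\otimes e_i$ the unnormalized maximally entangled vector, so that $(Id+F)^\Gamma=Id+\Omega\Omega^t\ge Id$. Since $(v\overline{v}^t)^\Gamma$ is Hermitian with finite operator norm, choosing $0<\epsilon\le\|(v\overline{v}^t)^\Gamma\|^{-1}$ forces $\gamma^\Gamma=Id+\Omega\Omega^t+\epsilon(v\overline{v}^t)^\Gamma\ge(1-\epsilon\|(v\overline{v}^t)^\Gamma\|)\,Id\ge 0$, so $\gamma$ is PPT.

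The main obstacle is the lower bound, concentrated in the collapse $\gamma_A=\epsilon\, v_A\overline{v_A}^t$ together with the rank computation $SR(v_A)=2n$. Recognizing the reshaped vector $v_A$ as a full-rank antisymmetric matrix via the symplectic form $J$ is the crux that converts the linear independence hypothesis into the sharp value of the Schmidt number; everything else (the Werner-state upper bound and the perturbative PPT estimate) is comparatively routine.
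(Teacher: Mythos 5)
Your proposal is correct and follows essentially the same route as the paper: upper bound from the separability of $Id+F$ combined with $SR(v)=n$, lower bound from Proposition \ref{proposition1} via the collapse $\gamma_A=\epsilon\,v_A\overline{v_A}^t$ with $SR(v_A)=2n$, and PPT from perturbing the positive definite $(Id+F)^\Gamma=Id+uu^t$. Your only additions are details the paper leaves implicit (the rank computation $M=CJC^t$ and the explicit choice of $\epsilon$), which are fine.
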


\begin{proof}
(1) Notice that $ \gamma_A=(Id-F)\gamma(Id-F)=\epsilon(a\overline{a}^t)$, where $a=\sum_{i=1}^na_i\otimes b_i-b_i\otimes a_i$. 

Since 
$\{a_1,\ldots,a_n,b_1,\ldots b_n\}$ is  linearly independent then $SR(v)=n$ and $SR(a)=2n$. Hence, $$SN(\epsilon(v\overline{v}^t))=SR(v)=n\text{ and } SN(\gamma_A)=SR(a)=2n.$$

Thus, $\displaystyle SN(\gamma)\geq\frac{SN(\gamma_A)}{2}=n$, by  Proposition \ref{proposition1}.\\

Next, the separability of  $Id+F\in M_k\otimes M_k$ is a well known fact, therefore $SN(Id+F)=1$. \\

Finally, $SN(\gamma)\leq \max\{SN(Id+F),SN(\epsilon(v\overline{v}^t))\}= \max\{1,n\}=n$. 
Therefore, $SN(\gamma)=n$.\\

(2) Notice that $(Id+F)^{\Gamma}=Id+uu^t$, where $u=\sum_{i=1}^ k e_i\otimes e_i$ and $\{e_1,\ldots, e_k\}$ is the canonical basis of $\mathbb{C}^k$. So $(Id+F)^{\Gamma}$ is positive definite and, for a  small $\epsilon$,   $(Id+F)^{\Gamma}+\epsilon(v\overline{v}^t)^{\Gamma}$  is positive definite too.
\end{proof}

\section{Main Inequality}

In this section, we present our main result (Theorem \ref{theoremnew0}) and two corollaries (Corollaries \ref{corollary1} and \ref{corollary2}).\\

\begin{theorem}\label{theoremnew0}If $\gamma\in M_k\otimes M_m$ is a state then $\text{rank}(\gamma)SN(\gamma)\geq \max\{\text{rank}(\gamma_L),\text{rank}(\gamma_R)\}$.
\end{theorem}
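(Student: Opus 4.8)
The plan is to prove the statement for $\text{rank}(\gamma_L)$ (the argument for $\text{rank}(\gamma_R)$ being symmetric, obtained by swapping the roles of the two tensor factors). First I would invoke Definition~\ref{def1} to fix a decomposition $\gamma=\sum_{j=1}^{N}w_j\overline{w_j}^t$ realizing the Schmidt number, so that $SR(w_j)\leq SN(\gamma)$ for every $j$, where $N=\text{rank}(\gamma)$ can be taken to equal the rank by choosing a minimal-length decomposition whose vectors $w_1,\dots,w_N$ span $\Im(\gamma)$ and are linearly independent. The key observation is then to relate the marginal $\gamma_L$ to these generators: I would compute $\gamma_L$ directly from the decomposition and show that $\Im(\gamma_L)$ is contained in the span of the left Schmidt vectors coming from all the $w_j$.

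Concretely, each $w_j\in\mathbb{C}^k\otimes\mathbb{C}^m$ has a Schmidt decomposition $w_j=\sum_{s=1}^{SR(w_j)}x_{js}\otimes y_{js}$ with at most $SN(\gamma)$ terms. The next step is to express the left marginal: writing $w_j\overline{w_j}^t$ as a bipartite operator and taking the partial trace over the second factor, the $A$-side block of $w_j\overline{w_j}^t$ is a positive combination of terms $x_{js}\overline{x_{jt}}^t$, so its image lies in $\text{span}\{x_{j1},\dots,x_{j,SR(w_j)}\}$, a space of dimension at most $SN(\gamma)$. Summing over $j$, the image of $\gamma_L$ is contained in $\sum_{j=1}^{N}\text{span}\{x_{js}\}_s$, whose dimension is at most $\sum_{j=1}^{N}SR(w_j)\leq N\cdot SN(\gamma)=\text{rank}(\gamma)\,SN(\gamma)$. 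This yields $\text{rank}(\gamma_L)=\dim\Im(\gamma_L)\leq \text{rank}(\gamma)\,SN(\gamma)$, which is the desired bound.

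I expect the main obstacle to be the careful bookkeeping identifying $\Im(\gamma_L)$ with the span of the left Schmidt vectors. One must verify that the partial trace of $w_j\overline{w_j}^t$ over the $\mathbb{C}^m$ factor indeed produces a positive operator whose range sits inside $\text{span}\{x_{j1},\dots,x_{j,SR(w_j)}\}$; this is where the conjugate-transpose convention of Definition~\ref{def1} and the correct handling of the Schmidt vectors $x_{js}$ (versus their conjugates) matter, since $\gamma_L=\sum_i A_i\,tr(B_i)$ must be reconciled with the rank-one form $\sum_j w_j\overline{w_j}^t$. Provided the partial-trace computation is set up correctly, the dimension count is routine, and the bound for $\gamma_R$ follows by the identical argument applied to the right Schmidt vectors $y_{js}$. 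Taking the maximum over the two marginals completes the proof.
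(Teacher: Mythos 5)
Your argument has a genuine gap at its very first step, and it is not a matter of bookkeeping: the claim that a decomposition realizing the Schmidt number can be taken with $N=\text{rank}(\gamma)$ terms is false in general. You are conflating two different minimizations. Any rank-one decomposition $\gamma=\sum_{j=1}^N w_j\overline{w_j}^t$ of a positive semidefinite matrix uses vectors that span $\Im(\gamma)$, so $N\geq \text{rank}(\gamma)$ always, and $N=\text{rank}(\gamma)$ is achievable (e.g.\ spectrally); but the decompositions that achieve $\min\max_j SR(w_j)=SN(\gamma)$ may \emph{all} have length $N>\text{rank}(\gamma)$. Already for $SN(\gamma)=1$ your claim amounts to asserting that every separable state admits a decomposition into exactly $\text{rank}(\gamma)$ product vectors, i.e.\ that the length of a separable state equals its rank; it is well documented in the literature on separable decompositions that states with length strictly larger than rank exist. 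For such a state, every decomposition with $\text{rank}(\gamma)$ linearly independent terms necessarily contains a vector of Schmidt rank at least $2$, so the decomposition you start from simply does not exist. Your partial-trace computation and the resulting count $\text{rank}(\gamma_L)\leq\sum_j SR(w_j)\leq N\,SN(\gamma)$ are correct, but with an optimal decomposition of length $N$ this only yields $\text{rank}(\gamma_L)\leq N\,SN(\gamma)$, which is weaker than the theorem precisely because $N$ can exceed $\text{rank}(\gamma)$.

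This gap is exactly why the paper argues by induction on $\text{rank}(\gamma)$ instead of directly from a decomposition. The paper needs only \emph{one} vector $v\in\Im(\gamma)$ with $SR(v)\leq SN(\gamma)$ --- this does follow from Definition~\ref{def1}, since the vectors of any optimal decomposition lie in $\Im(\gamma)$ --- and then compresses: after embedding so that $\gamma_L$ and $\gamma_R$ are positive definite, it chooses $U$ with $(U\otimes Id)v=0$ and $\text{rank}(U)=k-SN(\gamma)$, sets $\delta=(U\otimes Id)\gamma(U^*\otimes Id)$, observes that $\text{rank}(\delta)\leq\text{rank}(\gamma)-1$, that $SN(\delta)\leq SN(\gamma)$, and that $\text{rank}(\delta_L)=\text{rank}(U)=k-SN(\gamma)$, and then invokes the induction hypothesis to conclude. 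If you want to salvage a direct, decomposition-based proof, you would need a bound of the form ``some Schmidt-number-optimal decomposition has length at most $\text{rank}(\gamma)$,'' and no such bound holds.
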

\begin{proof}
The proof is an induction on $\text{rank}(\gamma)$. The case $\text{rank}(\gamma)=0$ is trivial. If  $\text{rank}(\gamma)=1$ then $SN(\gamma)=\text{rank}(\gamma_L)=\text{rank}(\gamma_R)$.\\

Let $\text{rank}(\gamma)>1$ and assume that this result is valid for states $\delta \in M_k\otimes M_m$ satisfying $\text{rank}(\delta)<\text{rank}(\gamma)$.\\

Since $\Im(\gamma)\subset\Im(\gamma_L\otimes\gamma_R)$ then $\gamma$ can be embedded in $M_{\text{rank}(\gamma_L)}\otimes M_{\text{rank}(\gamma_R)}$. The embedding does not change its rank or its Schmidt number. Thus, we can assume without loss of generality that $\text{rank}(\gamma_L)=k$ and $\text{rank}(\gamma_R)=m$.\\

Let $v\in\Im(\gamma)\setminus\{0\}$ be such that $SR(v)=SN(\gamma).$ \\

\begin{itemize}
\item If $k\geq m$ then choose $U\in M_k$ satisfying $\text{rank}(U)=k-SN(\gamma)$  and $(U\otimes Id)v=0$.\\ Define $\delta=(U\otimes Id)\gamma(U^*\otimes Id)$. Note that $\text{rank}(\delta)\leq \text{rank}(\gamma)-1$, since $\Im(\delta)\subset(U\otimes Id)(\Im(\gamma))$  and $(U\otimes Id)v=0$.\\

\item  If $k<m$ then choose $U\in M_m$ satisfying $\text{rank}(U)=m-SN(\gamma)$  and $(Id\otimes U)v=0$.\\ Define $\delta=(Id\otimes U)\gamma(Id\otimes U^*)$. Note that $\text{rank}(\delta)\leq \text{rank}(\gamma)-1$, since $\Im(\delta)\subset(Id\otimes U)(\Im(\gamma))$  and $(Id\otimes U)v=0$.\\
\end{itemize}

In any case, by induction hypothesis, $ \text{rank}(\delta)SN(\delta)\geq\max\{\text{rank}(\delta_L), \text{rank}(\delta_R)\}$.\\

\begin{itemize}
\item If $k\geq m$ then $\delta_L=U\gamma_LU^*$. Since $\gamma_L$ is positive definite then $\text{rank}(\delta_L)=\text{rank}(U)=k-SN(\gamma).$\\

\item  If $k<m$ then $\delta_R=U\gamma_RU^*$. Since $\gamma_R$ is positive definite then $\text{rank}(\delta_R)=\text{rank}(U)=m-SN(\gamma).$\\
\end{itemize}

Since $\text{rank}(\delta)\leq \text{rank}(\gamma)-1$ and $SN(\delta)\leq SN(\gamma)$ then\\

\begin{itemize}
\item$(\text{rank}(\gamma)-1)SN(\gamma)\geq k-SN(\gamma)$,  if $k\geq m$. Therefore, $\text{rank}(\gamma)SN(\gamma)\geq k$. \\

\item$(\text{rank}(\gamma)-1)SN(\gamma)\geq m-SN(\gamma)$,  if $k<m$. Therefore, $\text{rank}(\gamma)SN(\gamma)\geq m$. \\
\end{itemize}

The induction is complete.
\end{proof}
\vspace{0.3cm}

\begin{corollary}\label{corollary1} If $\gamma\in M_k\otimes M_k$ is a state then $\displaystyle SN(\gamma)\geq \frac{\text{rank}((\gamma_A)_L)}{2\ \text{rank}(\gamma_A)}$.
\end{corollary}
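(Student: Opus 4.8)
The plan is to combine the two inequalities already established in the paper. Specifically, I would apply Theorem~\ref{theoremnew0} not to $\gamma$ itself, but to the auxiliary state $\gamma_A=(Id-F)\gamma(Id-F)\in M_k\otimes M_k$. Since $\gamma_A$ is itself a state (being of the form $X\gamma X^*$ with $X=Id-F$ Hermitian, hence positive semidefinite), Theorem~\ref{theoremnew0} applies directly and yields
\begin{equation*}
\text{rank}(\gamma_A)\,SN(\gamma_A)\geq \max\{\text{rank}((\gamma_A)_L),\text{rank}((\gamma_A)_R)\}\geq \text{rank}((\gamma_A)_L).
\end{equation*}
Dividing through by $\text{rank}(\gamma_A)$ (which I would note is nonzero precisely when the right-hand side is nonzero, so the inequality is trivially true otherwise) gives $SN(\gamma_A)\geq \dfrac{\text{rank}((\gamma_A)_L)}{\text{rank}(\gamma_A)}$.

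The second ingredient is Proposition~\ref{proposition1}, which states $SN(\gamma)\geq \dfrac{SN(\gamma_A)}{2}$. Chaining these two bounds together, I would write
\begin{equation*}
SN(\gamma)\geq \frac{SN(\gamma_A)}{2}\geq \frac{1}{2}\cdot\frac{\text{rank}((\gamma_A)_L)}{\text{rank}(\gamma_A)}=\frac{\text{rank}((\gamma_A)_L)}{2\,\text{rank}(\gamma_A)},
\end{equation*}
which is exactly the claimed corollary. The entire argument is a two-line composition of previously proved results, so there is essentially no new content to verify.

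I do not anticipate a genuine obstacle here, but the one point deserving care is the degenerate case $\text{rank}(\gamma_A)=0$, where the fraction is undefined. In that case $\gamma_A=0$, so $(\gamma_A)_L=0$ and the right-hand side should be interpreted as $0$, making the inequality hold trivially; I would dispose of this case first (or adopt the convention $0/0=0$) before dividing. The only other thing I would double-check is that $\gamma_A$ genuinely qualifies as a ``state'' in the sense required by Theorem~\ref{theoremnew0}, i.e.\ that it is positive semidefinite Hermitian---this is immediate since $\gamma_A=(Id-F)\gamma(Id-F)^*$ with $Id-F$ Hermitian and $\gamma$ a state. With that observed, the proof is complete.
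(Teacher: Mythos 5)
Your proof is correct and takes essentially the same route as the paper: both apply Theorem~\ref{theoremnew0} to the state $\gamma_A$ and chain the resulting bound with $SN(\gamma_A)\leq 2SN(\gamma)$ from Proposition~\ref{proposition1}. Your explicit handling of the degenerate case $\text{rank}(\gamma_A)=0$ is a harmless extra precaution that the paper omits.
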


\begin{proof}

First, notice that $(\gamma_A)_L=(\gamma_A)_R$. Therefore, $\text{rank}((\gamma_A)_L)=\text{rank}((\gamma_A)_R)$.\\

Next, since $SN(\gamma_A)\leq 2SN(\gamma)$, by Proposition \ref{proposition1}, then  $\displaystyle\text{rank}(\gamma_A)SN(\gamma)\geq\frac{1}{2}\text{rank}(\gamma_A)SN(\gamma_A)\geq  \frac{\text{rank}((\gamma_A)_L)}{2}$, by Theorem \ref{theoremnew0}.
\end{proof}

\begin{corollary}\label{corollary2} If $\gamma\in M_k\otimes M_k$ is a state then $\displaystyle SN(\gamma)\geq \frac{\text{rank}((\gamma_S)_L)}{2\ \text{rank}(\gamma_S)}$.
\end{corollary}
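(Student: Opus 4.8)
The plan is to imitate the proof of Corollary~\ref{corollary1} verbatim, exchanging the symmetric part $\gamma_S$ for the antisymmetric part $\gamma_A$. The key structural fact needed is that the marginals of $\gamma_S$ coincide, i.e. $(\gamma_S)_L=(\gamma_S)_R$, which mirrors the identity $(\gamma_A)_L=(\gamma_A)_R$ used in the previous corollary. This should follow because $\gamma_S=(Id+F)\gamma(Id+F)$ is invariant under the flip conjugation $X\mapsto FXF$ (indeed $F(Id+F)=(Id+F)$ since $F^2=Id$, so $F\gamma_S F=\gamma_S$), and the flip swaps the two tensor factors, hence it interchanges the two marginal states. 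Because the two marginals are equal, their ranks agree, so $\max\{\text{rank}((\gamma_S)_L),\text{rank}((\gamma_S)_R)\}=\text{rank}((\gamma_S)_L)$.

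With that symmetry in hand, the argument chains two prior results. First I would invoke Proposition~\ref{proposition1}, which gives $SN(\gamma_S)\leq 2SN(\gamma)$, equivalently $SN(\gamma)\geq \tfrac12 SN(\gamma_S)$. Multiplying by $\text{rank}(\gamma_S)$ yields
\begin{equation*}
\text{rank}(\gamma_S)SN(\gamma)\geq \tfrac12\,\text{rank}(\gamma_S)SN(\gamma_S).
\end{equation*}
Then I would apply Theorem~\ref{theoremnew0} to the state $\gamma_S\in M_k\otimes M_k$, which bounds $\text{rank}(\gamma_S)SN(\gamma_S)\geq\max\{\text{rank}((\gamma_S)_L),\text{rank}((\gamma_S)_R)\}=\text{rank}((\gamma_S)_L)$. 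Combining the two displayed inequalities gives $\text{rank}(\gamma_S)SN(\gamma)\geq \tfrac12\text{rank}((\gamma_S)_L)$, and dividing by $\text{rank}(\gamma_S)$ produces the claimed bound.

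I do not anticipate any serious obstacle, as the result is a direct transplant of Corollary~\ref{corollary1}. The only point requiring a line of justification is the equality of marginals $(\gamma_S)_L=(\gamma_S)_R$; unlike the antisymmetric case, where one could read the single rank-one structure $\gamma_A=\epsilon(a\overline{a}^t)$ off directly in the motivating example, here the cleanest route is the symmetry argument via $F\gamma_S F=\gamma_S$. This symmetry is the mild analogue of the $(\gamma_A)_L=(\gamma_A)_R$ step, and once established the remainder is a two-step concatenation of Proposition~\ref{proposition1} and Theorem~\ref{theoremnew0} exactly as before.
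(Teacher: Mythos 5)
Your proposal is correct and follows essentially the same route as the paper: chain Proposition~\ref{proposition1} (giving $SN(\gamma_S)\leq 2SN(\gamma)$) with Theorem~\ref{theoremnew0} applied to $\gamma_S$, using $(\gamma_S)_L=(\gamma_S)_R$. The only difference is that you supply an explicit justification of that marginal identity via $F\gamma_S F=\gamma_S$, which the paper merely asserts; this is a correct and welcome addition, not a deviation.
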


\begin{proof}
First, notice that $(\gamma_S)_L=(\gamma_S)_R$. Therefore, $\text{rank}((\gamma_S)_L)=\text{rank}((\gamma_S)_R)$.\\

Since $SN(\gamma_S)\leq 2SN(\gamma)$, by Proposition \ref{proposition1}, then  $\displaystyle \text{rank}(\gamma_S)SN(\gamma)\geq\frac{1}{2}\text{rank}(\gamma_S)SN(\gamma_S)\geq  \frac{\text{rank}((\gamma_S)_L)}{2}$, by Theorem \ref{theoremnew0}.
\end{proof}

\vspace{0.3cm}

  \section*{Summary and Conclusion}

We presented an inequality that relates the marginal ranks of any bipartite state of $M_k\otimes M_m$ to its rank and its Schmidt number .  Using this inequality, we described a method of constructing entangled states which is not based on any correlation between $\text{rank}(\gamma_A)$ and  $\text{rank}(\gamma_S)$. 
This form of entanglement  differs completely from the entanglement derived from the inequality  $\text{rank}(\gamma_S)<\sqrt{\text{rank}(\gamma_A)}$.
We also constructed  a family of PPT states whose members have Schmidt number equal to $n$, for any given  $1\leq n\leq \left\lceil\frac{k-1}{2}\right\rceil$. This is a new contribution to the open problem of finding the best possible Schmidt number for PPT states.    

\vspace{0.3cm}

\begin{bibdiv}
\begin{biblist}

\bib{Cariello_LAA}{article}{
    title={A gap for PPT entanglement},
  author={Cariello, Daniel},
  journal={Linear Algebra and its Applications},
  volume={529},
  pages={89-114},
  year={2017}
}

\bib{Chen}{article}{
    title={Schmidt number of bipartite and multipartite states under local projections},
   author={Chen, L.},
   author={Yang, Y.},
   author={Tang, W. S. },
  journal={Quantum Information Processing},
  volume={16},
   number={3}
   year={2017},
   pages={75},
}

\bib{Guhne}{article}{
    title={Entanglement detection},
   author={G\"uhne, O.},
   author={T\'oth, G},
  journal={Physics Reports},
  volume={474},
   number={1-6}
   year={2009},
   pages={1--75},
}

\bib{gurvits2003}{article}{
  title={Classical deterministic complexity of Edmonds' Problem and quantum entanglement},
  author={Gurvits, Leonid},
  booktitle={Proceedings of the thirty-fifth annual ACM symposium on Theory of computing; Jun 9-–11; San Diego, CA, USA. New York: ACM press},
  pages={10--19},
  year={2003},
  organization={ACM}
}

\bib{gurvits2004}{article}{
  title={Classical complexity and quantum entanglement},
  author={Gurvits, Leonid},
  journal={Journal of Computer and System Sciences},
  volume={69},
  number={3},
  pages={448--484},
  year={2004},
  publisher={Elsevier}
}

\bib{horodeckifamily}{article}{
  title={Separability of mixed states: necessary and sufficient conditions},
  author={Horodecki, M.},
  author={Horodecki, P.},
  author={Horodecki, R.},
  journal={Phys. Lett. A.},
  volume={223},
  pages={1--8},
  year={1996},
  publisher={Elsevier}
}

\bib{smolin}{article}{
   author={Horodecki, Pawel},
   author={Smolin, John A.},
   author={Terhal, B.M.}
   author={Thapliyal, Ashish V.}
   title={Rank two bipartite bound entangled states do not exist},
  journal={Theoretical Computer Science},
  volume={292},
  number={3},
  pages={589--596},
  year={2003},
  publisher={Elsevier}
}

\bib{Marcus}{article}{
   author={Huber, M.},
   author={Lami, L.},
   author={Lancien, C.}
   author={M\"uller-Hermes, A. },
   title={High-dimensional entanglement in states with positive partial transposition.},
   journal={arXiv:1802.04975},
   year={2018},
}

\bib{peres}{article}{
    title={Separability criterion for density matrices},
  author={Peres, Asher},
  journal={Physical Review Letters},
  volume={77},
  number={8},
  pages={1413},
  year={1996},
  publisher={APS}
}

\bib{Sperling2011}{article}{
   author={Sperling, J.},
   author={Vogel, W.},
   title={The Schmidt number as a universal entanglement measure.},
   journal={Physica Scripta},
   volume={83},
   number={4}
   year={2011},
   pages={045002},
}

\bib{Terhal}{article}{
   author={Terhal, B. M.},
   author={Horodecki, P.},
   title={Schmidt number for density matrices},
   journal={Phys. Rev. A},
   volume={61},
   number={4}
   year={2000},
   pages={040301},
}

\bib{Yang}{article}{
   author={Yang, Y.},
   author={Leung, D. H.},
   author={Tang, W. S.}
   title={All 2-positive linear maps from M3 (C) to M3 (C) are decomposable},
   journal={Linear Algebra and its Applications },
   volume={503},
   year={2016},
   pages={233-247},
}

\end{biblist}
\end{bibdiv}

\end{document}